\begin{document}

\newcommand{\bh}{\boldsymbol{h}}
\newcommand{\bv}{\boldsymbol{v}}
\newcommand{\bw}{\boldsymbol{w}}
\newcommand{\bx}{\boldsymbol{x}}
\newcommand{\by}{\boldsymbol{y}}
\newcommand{\bz}{\boldsymbol{z}}
\newcommand{\btheta}{\boldsymbol{\theta}}
\newcommand{\bphi}{\boldsymbol{\phi}}
\newcommand{\bpsi}{\boldsymbol{\psi}}
\newcommand{\bhaty}{\boldsymbol{\hat{y}}}
\newcommand{\argmin}{\mathop{\mathrm{argmin}}}
\renewcommand{\P}{\mathcal{P}}
\renewcommand{\O}{\mathcal{O}}
\newcommand{\separable}[1]{$#1$-separable}

\makeatletter
\def\@copyrightspace{\relax}
\makeatother

\title{A Generic Coordinate Descent Framework\\ for Learning from Implicit Feedback}

\numberofauthors{4}
\author{
\alignauthor
Immanuel Bayer\thanks{Work done at Google.}\\
\affaddr{University of Konstanz, Germany}\\
\email{immanuel.bayer@uni-konstanz.de}
\alignauthor
Xiangnan He\footnotemark[1]\\
\affaddr{National University of Singapore, Singapore}\\
\email{xiangnan@comp.nus.edu.sg}
\and
\alignauthor
Bhargav Kanagal\\
\affaddr{Google Inc., USA}\\
\email{bhargav@google.com}
\alignauthor
Steffen Rendle\\
\affaddr{Google Inc., USA}\\
\email{srendle@google.com}
}
\maketitle

\begin{abstract}
In recent years, interest in recommender research has shifted from explicit feedback towards implicit feedback data.
A diversity of complex models has been proposed for a wide variety of applications.
Despite this, learning from implicit feedback is still computationally challenging.
So far, most work relies on stochastic gradient descent (SGD) solvers which are easy to derive, but in practice challenging to apply, especially for tasks with many items.
For the simple matrix factorization model, an efficient coordinate descent (CD) solver has been previously proposed.
However, efficient CD approaches have not been derived for more complex models.

In this paper, we provide a new framework for deriving efficient CD algorithms for complex recommender models.
We identify and introduce the property of \separable{k} models.
We show that $k$-separability is a sufficient property to allow efficient optimization of implicit recommender problems with CD.
We illustrate this framework on a variety of state-of-the-art models including factorization machines and Tucker decomposition.
To summarize, our work provides the theory and building blocks to derive efficient implicit CD algorithms for complex recommender models.
\end{abstract}

\section{Introduction}

In recent years, the focus of recommender system research has shifted from explicit feedback problems such as rating prediction to implicit feedback problems.
Most of the signal that a user provides about her preferences is \emph{implicit}.
Examples for implicit feedback are: a user watches a video, clicks on a link, etc.
Implicit feedback data is much cheaper to obtain than explicit feedback, because it comes with no extra cost for the user and thus is available on a much larger scale.
However, learning a recommender system from implicit feedback is computationally expensive because the observed actions of a user need to be contrasted against {\em all} the non-observed actions \cite{Hu:icdm2008,rendle:uai09}.

Stochastic gradient descent (SGD) and coordinate descent (CD) are two widely used algorithms for large scale machine learning.
Both algorithms are considered state-of-the-art for learning matrix factorization models from implicit feedback and have been studied extensively.
SGD and CD have shown different strengths and weaknesses on various data sets  \cite{HeM16,ShiKaratzoglouBaltrunasEtAl2012,ShiKaratzoglouBaltrunasEtAl2012a,NingKarypis2011,Zhao14,sedhain2016effectiveness,volkovs2015effective,zhao2015improving}.
While SGD is available as a general framework to optimize a broad class of models~\cite{rendle:uai09}, CD is only available for a few simple models~\cite{Hu:icdm2008,PilaszyZibriczkyTikk2010}.
In fact, it is even unknown if CD can be used to efficiently optimize complex recommender models.
Our work closes this gap and identifies a model property called \emph{$k$-separability}, that is a sufficient condition to allow efficient learning from implicit feedback.
Based on $k$-separability, we provide a general framework to derive efficient implicit CD solvers.

Our paper is organized as follows:
First, we introduce the problem of learning from implicit feedback and show that the number of implicit training examples makes the application of standard algorithms challenging.
Next, we provide our general framework for efficient implicit learning with CD.
We identify $k$-separability of a model as a sufficient property to make efficient learning feasible and introduce iCD, a generic learning algorithm for $k$-separable models.
In Section~\ref{sec:applications}, we show how to apply iCD to a diverse set of models, including, matrix factorization (MF), factorization machines (FM) and tensor factorization.
This section serves both as solutions to popular models as well as a guide for applying the framework to other complex recommender models.

To summarize, our contributions are:
\begin{itemize}
  \item We identify a basic property of recommender models that allows efficient CD learning from implicit data.
  \item We provide iCD, a framework to derive efficient implicit CD algorithms.
  \item We apply the framework and derive algorithms for MF, MF with side information, FM, PARAFAC and Tucker Decomposition.
\end{itemize}

\section{Related Work}

Since several years, matrix factorization (MF) is regarded as the most effective, basic recommender system model.
Two optimization strategies dominate the research on MF from implicit feedback data.
The first one is Bayesian Personalized Ranking (BPR) \cite{rendle:uai09}, a stochastic gradient descent (SGD) framework, that contrasts pairs of consumed to non-consumed items.
The second one is coordinate descent (CD) also known as alternating least squares on an elementwise loss over both the consumed and non-consumed items \cite{Hu:icdm2008}.
In terms of the loss formulation, BPR's pairwise classification loss is better suited for ranking whereas CD loss is better suited for numerical data.
With regard to the optimization task, both techniques face the same challenge of learning over a very large number of training examples.
BPR tackles this issue by sampling negative items, but it has been shown that BPR has convergence problems when the number of items is large \cite{mcfee:www12,rendle:wsdm14}.
It requires more complex, non-uniform, sampling strategies for dealing with this problem \cite{rendle:wsdm14,kanagal:vldb12}. 
On the other hand, for CD-MF, Hu et al.~\cite{Hu:icdm2008} have derived an efficient algorithm that allows to optimize over the large number of non-consumed items without any cost.
This computational trick is exact and does not involve sampling.
Many authors have compared both CD-MF and BPR-MF on a variety of datasets and some work reports better quality for BPR-MF \cite{HeM16,ShiKaratzoglouBaltrunasEtAl2012,ShiKaratzoglouBaltrunasEtAl2012a,NingKarypis2011} whereas for other problems CD-MF works better \cite{NingKarypis2011,Zhao14,sedhain2016effectiveness,volkovs2015effective,zhao2015improving}.
This large body of results indicates that the advantages of CD and BPR are orthogonal and both approaches have their merits.

Our discussion so far was focused on learning matrix factorization models from implicit data.
Shifting from simple matrix factorization to more complex factorization models has shown large success in many implicit recommendation problems~\cite{GantnerDrumondFreudenthalerEtAl2010,HeM16,ShmueliKagianKorenEtAl2012,ChengYangLyuEtAl2013,PanChen2013,YuRenSunEtAl2014}.
However, work on complex factorization models relies almost exclusively on SGD optimization using the generic BPR framework.
Our work, provides the theory as well as a practical framework for deriving CD learners for such complex models.
Like CD for MF, our generic algorithm is able to optimize on all non-consumed items without explicitly iterating over them.
To summarize, our paper enables researchers and practitioners to apply CD in their work and gives them a choice between the advantages of BPR and CD.

\section{Problem Statement}

\begin{figure}[t]
  \centering
  \includegraphics[width=0.3\textwidth]{./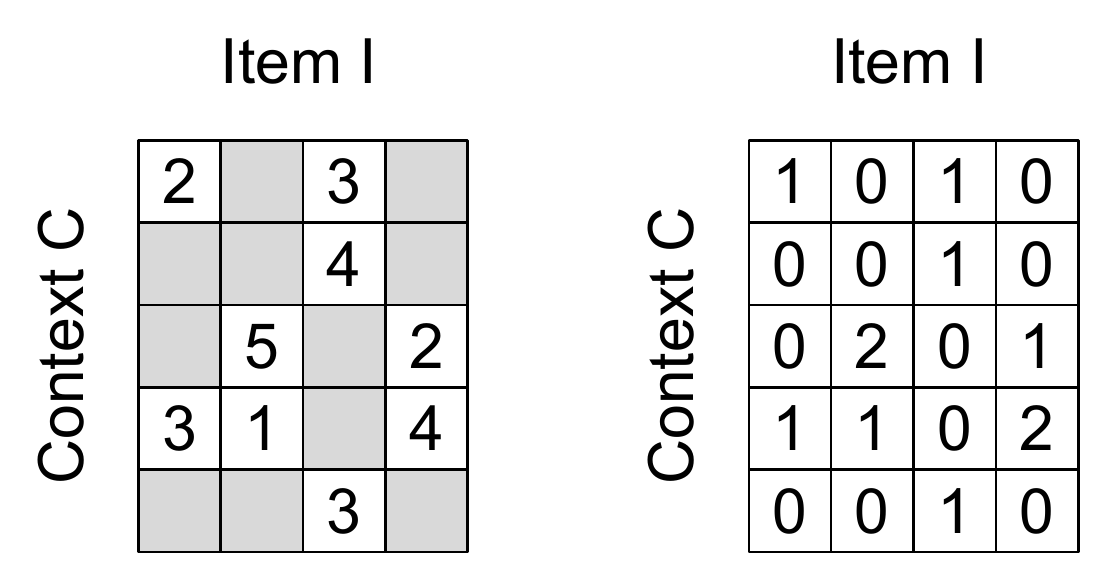}
  \caption{Left: Explicit rating data, with $S=\{(c_1,i_1,2),(c_1,i_3,3),(c_2, i_3,4),\ldots\}$. Right: Implicit data, e.g., watch/ purchase/ click count, $|S_\text{impl}| = |C||I|$.
  \label{fig:rec_data}}
\end{figure}

Let $I$ be a set of items and $C$ a set of contexts.
Let $S$ be a set of observed feedback where a tuple $(c,i,y,\alpha) \in S$ indicates that in context $c$, a score $y$ has been assigned to item $i$ with confidence $\alpha$.
See Figure~\ref{fig:rec_data} for an illustration.
We use a general notation of context which can include for instance user, time, location, attributes, history, etc.
Section~\ref{sec:applications} and Section~\ref{sec:experiments} show more examples for context.

\subsection{Recommender Model}
A recommender model $\hat{y} : C \times I \rightarrow \mathbb{R}$ is a function that assigns a score to every context-item pair.
The model $\hat{y}$ is parameterized by a set of model parameters $\Theta$.
The model $\hat{y}$ is typically used to decide which items to present in a given context.
\pagebreak

The learning task is to find the values of the model parameters that minimize a loss over the data $S$, e.g., a squared loss
\begin{align}
  L(\Theta|S) = \sum_{(c,i,y,\alpha) \in S} \alpha\,(\hat{y}(c,i) - y)^2 + \sum_{\theta \in \Theta} \lambda_\theta \theta^2 \label{eq:explicit_loss}
\end{align}
where $\lambda_\theta$ is an regularization constant for parameter $\theta$.

\subsection{Coordinate Descent Algorithm}
\label{sec:explicit_cd}

Objective (\ref{eq:explicit_loss}) can be minimized by coordinate descent (CD).
CD iterates through the model parameters and updates one parameter at a time. 
For a selected parameter $\theta \in \Theta$, CD computes the first $L'$ and second derivative $L''$ of $L$ with respect to the selected coordinate $\theta$:
\begin{align}
  L'(\theta|S) &= 2\,\sum_{(c,i,y,\alpha) \in S} \alpha\,(\hat{y}(c,i) - y) \hat{y}'(c,i) + 2\, \lambda_\theta\, \theta \\
  L''(\theta|S) &= 2\,\sum_{(c,i,y,\alpha) \in S} \alpha\,[(\hat{y}(c,i) - y) \hat{y}''(c,i) + \hat{y}'(c,i)^2] + 2\, \lambda_\theta
\end{align}
and performs a Newton update step:
\begin{align}
  \theta \leftarrow \theta - \eta\,\frac{L'(\theta|S)}{L''(\theta|S)} \label{eq:newton}
\end{align}
where $\eta \in (0,1]$ is the step size.
For multilinear models, a full step, i.e., $\eta=1$, can be chosen without risking divergence~\cite{Rendle:tist2012}.
All models in Section~\ref{sec:applications} fall into this category.

Such CD algorithms have been well studied and the runtime complexity is typically linear in the complexity of the training examples and embedding dimension.
For MF, \cite{yu:icdm12} shows a complexity of $\O(|S|\,k)$ and for FM, \cite{Rendle:tist2012} derives a complexity of $\O(N_Z(X)\,k)$ where $N_Z(X)$ is the number of non-zero entries in the design matrix $X$.
The linear runtime complexity in the number of training examples makes these algorithms well suited for explicit recommendation settings, however, they become infeasible for implicit problems.

\subsection{Learning from Implicit Feedback}
\label{sec:implicit}

In an implicit recommendation problem, the non-consumed items are meaningful and cannot be ignored.
For instance, in Figure~\ref{fig:rec_data} (right), the data depicts how often each item was consumed in a context in the past.
The non-consumed items, i.e., the ones with a count of zero, are useful to learn user preferences.
To formalize, the training data $S_{\text{impl}}$ of an implicit problem consists of a set $S^+$ of observed feedback and all the non-consumed tuples $S^0$
\begin{align}
  S_{\text{impl}} = S^+ \cup S^0, \quad |S_{\text{impl}}| = |C|\,|I| \label{eq:implicit_data}
\end{align}
with
\begin{align}
  \forall (c,i,y,\alpha) \in S^0 : y = 0, \alpha = \alpha_0 .
\end{align}
$S^+$ contains the observed feedback and is of much smaller scale than $S_\text{impl}$, usually $|S^+| \ll |C|\,|I|$. 

The implicit learning problem can be stated as minimizing the objective in eq.~(\ref{eq:explicit_loss}) over the implicit data $S_\text{impl}$.
While possible in theory, in practice, it is infeasible to apply the learning algorithms of Section~\ref{sec:explicit_cd} to this problem due to their linear computational runtime in the size of the training data which is $|S_\text{impl}| = |C||I|$ for implicit problems.
Our paper shows how to derive efficient CD algorithms for optimizing eq.~(\ref{eq:explicit_loss}) over implicit data.

\section{Generic Coordinate Descent Algorithm for Implicit Feedback}
\label{sec:method}

\subsection{Implicit Regularizer}

As discussed in Section~\ref{sec:implicit}, the reason why training on implicit data is challenging is the large number of implicit examples $S^0$ which is typically $|S^0| \in \O(|C||I|)$.
Note that $S^0$ includes all context-item pairs that are \textbf{not} in $S^+$.
We show now that we can rephrase the optimization criterion to sum over \textbf{all} context-item pairs.
This reformulation is a prerequisite to later allow the decomposition of the loss in Section~\ref{sec:decomposition}.
Moreover it allows to study implicit optimization without having to consider $S^+$.

\newtheorem{lemma}{Lemma}
\begin{lemma}
  \label{lemma:rephrase}
Implicit learning can be rephrased as a combination of learning on a small positive set and minimizing the scoring function on \textbf{any} context-item pair.  
\begin{align}
	\argmin_\Theta L(\Theta|S_\text{impl}) =  \argmin_\Theta \biggl( L(\Theta|S) + \alpha_0 \underbrace{\sum_{c \in C} \sum_{i \in I} \hat{y}(c,i)^2}_{ =: R(\Theta)} \biggr) \label{eq:iloss}
\end{align}
where the observed feedback is rescaled
\begin{align}
  S := \left\{\left(c,i,\frac{\alpha}{\alpha-\alpha_0}y,\alpha-\alpha_0\right) : (c,i,y,\alpha) \in S^+\right\} . \label{eq:rescale}
\end{align}
\end{lemma}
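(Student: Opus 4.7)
The plan is to show that $L(\Theta|S_\text{impl})$ and $L(\Theta|S) + \alpha_0 R(\Theta)$ differ only by an additive term that is independent of $\Theta$, which is sufficient to equate the two argmins.

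First, I would split the loss over the implicit data as $L(\Theta|S_\text{impl}) = \sum_{(c,i,y,\alpha) \in S^+} \alpha (\hat{y}(c,i) - y)^2 + \sum_{(c,i) \in S^0} \alpha_0\, \hat{y}(c,i)^2 + \sum_\theta \lambda_\theta \theta^2$, using that every tuple in $S^0$ has $y=0$ and $\alpha=\alpha_0$. The regularizer is common to both sides of the claimed identity, so I can ignore it from here on. The key observation is that summing over $S^0$ is almost the same as summing over all context--item pairs $C \times I$, except the $S^+$ entries are missing; adding and subtracting those entries gives $\sum_{(c,i) \in S^0} \alpha_0 \hat{y}(c,i)^2 = \alpha_0 R(\Theta) - \alpha_0 \sum_{(c,i,y,\alpha) \in S^+} \hat{y}(c,i)^2$.

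After substitution, the remaining task is to show that
\begin{equation*}
\sum_{(c,i,y,\alpha) \in S^+} \Bigl[ \alpha(\hat{y}(c,i) - y)^2 - \alpha_0\, \hat{y}(c,i)^2 \Bigr]
\end{equation*}
agrees with $\sum_{(c,i,\tilde y,\tilde\alpha) \in S} \tilde\alpha (\hat{y}(c,i) - \tilde y)^2$ up to an additive $\Theta$-independent constant, where $\tilde\alpha = \alpha - \alpha_0$ and $\tilde y = \tfrac{\alpha}{\alpha-\alpha_0}\, y$ as prescribed by the rescaling in eq.~(\ref{eq:rescale}). I would verify this per-term by completing the square: expanding $\alpha(\hat{y}-y)^2 - \alpha_0 \hat{y}^2$ yields $(\alpha-\alpha_0)\hat{y}^2 - 2\alpha y\, \hat{y} + \alpha y^2$, and expanding $(\alpha-\alpha_0)(\hat{y}-\tilde y)^2$ with the stated $\tilde y$ produces $(\alpha-\alpha_0)\hat{y}^2 - 2\alpha y\, \hat{y} + \tfrac{\alpha^2 y^2}{\alpha-\alpha_0}$. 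The two $\Theta$-dependent pieces, namely the quadratic and linear terms in $\hat{y}$, match exactly; the difference is the constant $\alpha y^2 - \tfrac{\alpha^2 y^2}{\alpha-\alpha_0}$, which does not depend on $\Theta$.

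Putting the pieces together, $L(\Theta|S_\text{impl}) = L(\Theta|S) + \alpha_0 R(\Theta) + C$ with $C$ independent of $\Theta$, and the argmin identity of eq.~(\ref{eq:iloss}) follows. I expect the only delicate step to be getting the rescaling constants right: the particular form $\tilde y = \tfrac{\alpha}{\alpha-\alpha_0} y$ is precisely what is needed to cancel the linear cross term $2\alpha y\, \hat{y}$ after subtracting $\alpha_0 \hat{y}^2$, and this is where the definition of $S$ in eq.~(\ref{eq:rescale}) is forced. Everything else is bookkeeping.
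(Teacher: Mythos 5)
Your proposal is correct and follows essentially the same route as the paper's proof: split the implicit loss, add and subtract the $S^+$ terms to form $\alpha_0 R(\Theta)$, then complete the square per observed example to see that the rescaling $\tilde y = \frac{\alpha}{\alpha-\alpha_0}y$, $\tilde\alpha = \alpha-\alpha_0$ absorbs the $-\alpha_0\hat{y}(c,i)^2$ correction up to a $\Theta$-independent constant. The algebra, including the identified constant $\alpha y^2 - \frac{\alpha^2 y^2}{\alpha-\alpha_0}$, checks out.
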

\begin{proof}
Per definition of the loss (eq.~\ref{eq:explicit_loss}) and the implicit training set $S_\text{impl}$ (eq.~\ref{eq:implicit_data})
\begin{align*}
  &L(\Theta|S_\text{impl}) = L(\Theta|S^+) + \alpha_0\sum_{(c,i) \in S^0} \hat{y}(c,i)^2 \\
 =& L(\Theta|S^+) - L(\Theta|\{(c,i,0,\alpha_0) : (c,i,y,\alpha) \in S^+\}) +  \alpha_0 R(\Theta)
\end{align*}
We further can collapse each pair of examples into a single one.
We show this for the pair $(c,i,y,\alpha) \in S$ and its counterpart $(c,i,0,-\alpha_0)$.
\begin{align*}
  &L(\Theta|\{(c,i,y,\alpha)\}) +  L(\Theta|\{(c,i,0,-\alpha_0)\}) \\
= &\alpha\,(\hat{y}(c,i) - y)^2 - \alpha_0\,\hat{y}(c,i)^2 \\
= &(\alpha - \alpha_0)\,\left(\hat{y}(c,i)^2 - 2\,\frac{\alpha}{\alpha - \alpha_0} y\,\hat{y}(c,i) + \frac{\alpha}{\alpha - \alpha_0} y^2\right) \\
= &(\alpha - \alpha_0)\left(\hat{y}(c,i) - \frac{\alpha}{\alpha - \alpha_0} y\right)^2 + \text{const} \\
= &L\left(\Theta|\left\{\left(c,i,\frac{\alpha}{\alpha - \alpha_0} y,\alpha-\alpha_0\right)\right\}\right) + \text{const}
\end{align*}
The additional constant does not change the optimum for $\Theta$, so rescaling of examples as in eq.~(\ref{eq:rescale}) preserves the optimum.
\end{proof}
The lemma allows an interesting interpretation of implicit learning tasks.
Implicit problems can be seen as explicit or one-class problems with an additional \emph{implicit regularizer} or bias $R(\Theta)$ for predicting zeros.
Compared to a common regularizer such as L2, the implicit regularizer is aware of the model $\hat{y}$.
L2 penalizes non-zero \emph{model parameters} $\Theta$ whereas the implicit regularizer penalizes non-zero \emph{predictions} $\hat{y}$.
Consequently, the implicit regularizer is less restrictive than L2 because small predictions can be achieved even with large model parameters.

\subsection{iCD Algorithm for \separable{k} Models}
\label{sec:decomposition}

As shown in eq.~(\ref{eq:iloss}), implicit learning can be formulated as explicit learning on a small set $S$ with an expensive implicit regularizer $R$.
Learning models over an explicit loss is already well studied~\cite{yu:icdm12,Rendle:tist2012}, so we focus now on the implicit regularizer
\begin{align}
  R(\Theta) = \sum_{c \in C} \sum_{i \in I} \hat{y}(c,i)^2 \label{eq:lci}
\end{align}
The general computational complexity is $\O(|C||I|)$.

In this section, we introduce the concept of a \separable{k} model.
We will provide an efficient implicit CD solver for any \separable{k} model.
In Section~\ref{sec:applications}, we show that many common models are \separable{k}, including matrix factorization, feature-based approaches such as factorization machines, but also higher-order tensor factorization such as PARAFAC or Tucker decomposition.
The iCD framework that we derive in this section is not limited to the models described above but can serve as a blueprint for other \separable{k} models as well.

\newtheorem{definition}{Definition}
\begin{definition}[\separable{k}]
  \label{definition:separable}
A model $\hat{y}(c,i)$ is called \underline{\separable{k}} iff the model can be rewritten as
\begin{align}
  \hat{y}(c,i) = \langle \bphi(c), \bpsi(i)\rangle = \sum_{f=1}^k \phi_f(c) \, \psi_f(i) \label{eq:separable}
\end{align}
with functions
\begin{align}
  \bphi : C \rightarrow \mathbb{R}^k, \quad \bpsi : I \rightarrow \mathbb{R}^k
\end{align}
where $\bphi$ is parameterized by $\Theta^C$ and $\bpsi$ is parameterized by $\Theta^I$ with $\Theta^C \cap \Theta^I = \emptyset$.
\end{definition}

\begin{lemma}
\label{lemma:decomposition}
The implicit regularizer of any \separable{k} model can be decomposed to:
\begin{align}
  R(\Theta) = \sum_{f=1}^k \sum_{f'=1}^k \underbrace{\sum_{c \in C} \phi_f(c)\,\phi_{f'}(c)}_{=:J_C(f,f')} \underbrace{\sum_{i \in I} \psi_f(i)\,\psi_{f'}(i)}_{=:J_I(f,f')} \label{eq:dloss}
\end{align}
\end{lemma}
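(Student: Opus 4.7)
The proof is a direct algebraic manipulation exploiting the fact that the summations over $C$ and $I$ decouple once the square is expanded. The plan is to substitute the $k$-separable representation into the definition of $R(\Theta)$, expand the square, and then interchange the order of summation so that the context and item sums can be factored independently.

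Concretely, I would begin from
\begin{align*}
R(\Theta) = \sum_{c \in C}\sum_{i \in I} \hat{y}(c,i)^2
\end{align*}
and substitute $\hat{y}(c,i) = \sum_{f=1}^k \phi_f(c)\,\psi_f(i)$ from Definition~\ref{definition:separable}. Expanding the square yields
\begin{align*}
\hat{y}(c,i)^2 = \sum_{f=1}^k \sum_{f'=1}^k \phi_f(c)\,\phi_{f'}(c)\,\psi_f(i)\,\psi_{f'}(i),
\end{align*}
which is the key step because it factors the integrand into a product of a pure-$c$ term and a pure-$i$ term for each fixed pair $(f,f')$.

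Next I would push the two double sums over $c$ and $i$ inside the finite sums over $f,f'$ (all sums are finite so the interchange is trivially justified), and group terms to obtain
\begin{align*}
R(\Theta) = \sum_{f=1}^k \sum_{f'=1}^k \left(\sum_{c \in C} \phi_f(c)\,\phi_{f'}(c)\right)\!\left(\sum_{i \in I} \psi_f(i)\,\psi_{f'}(i)\right),
\end{align*}
which matches the claimed decomposition upon recognizing the inner sums as $J_C(f,f')$ and $J_I(f,f')$.

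There is no real obstacle here: the disjointness condition $\Theta^C \cap \Theta^I = \emptyset$ is not even needed for this identity (it will matter later for coordinate updates), and the interchange of finite sums requires no justification beyond commutativity. The only thing to be careful about is clearly labeling which indices are being summed when expanding the square, so that the double index $(f,f')$ is not conflated with the original single index $f$ of the separable representation.
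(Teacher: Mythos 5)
Your proof is correct and follows the same route as the paper: insert the \separable{k} representation into $R(\Theta)$, expand the square with a double index $(f,f')$, and rearrange the finite sums so that the context and item sums factor into $J_C(f,f')$ and $J_I(f,f')$. Your additional remark that the disjointness $\Theta^C \cap \Theta^I = \emptyset$ plays no role in this identity is accurate and a nice clarification.
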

\begin{proof}
The lemma follows from inserting the \separable{k} model (eq.~\ref{eq:separable}) into the implicit regularizer (eq.~\ref{eq:lci}) and rearranging the summations.
\begin{align*}
  R(\Theta) &= \sum_{c \in C} \sum_{i \in I} \sum_{f=1}^k \phi_f(c) \, \psi_f(i)  \sum_{f'=1}^k \phi_{f'}(c) \, \psi_{f'}(i) \\
		   &= \sum_{f=1}^k \sum_{f'=1}^k \left( \sum_{c \in C} \phi_f(c)\,\phi_{f'}(c) \right) \left(\sum_{i \in I} \phi_f(i)\,\phi_{f'}(i) \right)
\end{align*}
\end{proof}
This lemma is key to efficient learning algorithms from implicit data.
It shows that the context and item sides can be computed independently, which drops the computational complexity from $\O(|C|\,|I|)$ to $\O((|C| + |I|)\,k^2)$.
Next, we show how this can be used for gradient computation which is required for the update step in CD (see eq.~\ref{eq:newton}).
\begin{lemma}
\label{lemma:gradients}
The implicit regularizer gradients of any \separable{k} model with respect to any model parameter $\theta \in \Theta^C$ (or analogously $\theta \in \Theta^I$), can be simplified to
\begin{align}
  R'(\theta) = 2\,\sum_{f=1}^k \sum_{f'=1}^k J_I(f,f') \sum_{c \in C} \phi_f(c)\,\phi'_{f'}(c) \label{eq:dlossp}
\end{align}
\begin{align}
  R''(\theta) = 2\,\sum_{f=1}^k \sum_{f'=1}^k J_I(f,f') \sum_{c \in C} \left[ \phi_f(c)\,\phi''_{f'}(c) + \phi'_f(c)\,\phi'_{f'}(c) \right] \label{eq:dlosspp}
\end{align}
\end{lemma}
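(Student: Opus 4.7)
The plan is to start from the decomposition established in Lemma~\ref{lemma:decomposition} and simply differentiate term by term, exploiting the disjointness $\Theta^C \cap \Theta^I = \emptyset$ together with the symmetry of the factors $J_I(f,f')$ in their indices.

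Concretely, since $\theta \in \Theta^C$ only parameterizes $\bphi$, the factor $J_I(f,f') = \sum_{i \in I} \psi_f(i)\,\psi_{f'}(i)$ is constant with respect to $\theta$. So the first step is to write
\begin{align*}
  R'(\theta) = \sum_{f=1}^k \sum_{f'=1}^k J_I(f,f')\, \frac{\partial J_C(f,f')}{\partial \theta},
\end{align*}
and then apply the product rule to $J_C(f,f') = \sum_{c \in C} \phi_f(c)\,\phi_{f'}(c)$, obtaining a sum of the two terms $\phi'_f(c)\phi_{f'}(c)$ and $\phi_f(c)\phi'_{f'}(c)$. The next step is to observe that $J_I(f,f') = J_I(f',f)$, so swapping $f \leftrightarrow f'$ inside the double sum shows that the two contributions are equal. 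Merging them yields the factor of $2$ in eq.~(\ref{eq:dlossp}).

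For the second derivative, I would just differentiate eq.~(\ref{eq:dlossp}) once more with respect to $\theta$, again leaving $J_I(f,f')$ untouched and applying the product rule to $\phi_f(c)\,\phi'_{f'}(c)$. This produces the two summands $\phi'_f(c)\,\phi'_{f'}(c)$ and $\phi_f(c)\,\phi''_{f'}(c)$ inside the bracket, matching eq.~(\ref{eq:dlosspp}).

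I do not anticipate a real obstacle here; the argument is essentially mechanical once Lemma~\ref{lemma:decomposition} is in place. The only point that deserves a brief mention is the symmetry argument that produces the factor $2$ in $R'$ (and more generally the fact that the $f,f'$ summation does not need to be restricted, so that one can freely relabel indices). The symmetric case $\theta \in \Theta^I$ follows by interchanging the roles of $\bphi$ and $\bpsi$, using $J_C$ in place of $J_I$, and I would simply remark on this at the end rather than redoing the computation.
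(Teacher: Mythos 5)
Your argument is correct and is exactly the paper's route: the paper's proof is the one-line remark that the lemma ``follows from deriving eq.~(\ref{eq:dloss})'', i.e., differentiating the decomposition of Lemma~\ref{lemma:decomposition} while treating $J_I(f,f')$ as constant for $\theta \in \Theta^C$. You merely spell out the details the paper leaves implicit, in particular the relabeling $f \leftrightarrow f'$ together with the symmetry of $J_I$ that produces the factor $2$, which is a fine (and slightly more careful) write-up of the same computation.
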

\begin{proof}
  The lemma follows from deriving eq.~(\ref{eq:dloss}).
\end{proof}
This lemma shows that computing $R'$ and $R''$ of any context parameter is independent of $|I|$.

\begin{algorithm}[t]
  \caption{Generic Implicit CD}
  \label{alg:cdgeneric}
  \begin{algorithmic}[1]
    \Procedure{iCD-Generic}{$S, C, I$}
      \State $\Theta \leftarrow \mathcal{N}(0, \sigma)$
      \Repeat
        \State Compute $\Phi$ and $\Psi$ if necessary
        \State Compute $J_I$  \label{line:cd_generic_context_begin}
        \For{$\theta \in \Theta^C$}
          \State Compute $L'(\theta|S), L''(\theta|S)$ 
            \State Compute $R'(\theta), R''(\theta)$ 
            \State $\theta \leftarrow \theta - \eta \frac{L'(\theta|S) + \alpha_0\,R'(\theta)}{L''(\theta|S) + \alpha_0\, R''(\theta)}$
            \State Update $\Phi$ if necessary
        \EndFor \label{line:cd_generic_context_end}
        \State Apply step \ref{line:cd_generic_context_begin} to \ref{line:cd_generic_context_end} to the items.
      \Until{converged}
      \State \Return $\Theta$
    \EndProcedure
  \end{algorithmic}
\end{algorithm}

From the analysis follows the recipe to derive an efficient iCD learning algorithm for a model $\hat{y}$.
First, rewrite the model as a dot product of $\bphi$ and $\bpsi$.
Second, construct the first and second derivative of $\bphi$ and $\bpsi$ with respect to any model parameter $\theta \in \Theta$.
These results allow to compute $R'(\theta)$ and  $R''(\theta)$ for any model parameter $\theta \in \Theta$ efficiently.
With these gradients for the expensive implicit regularizer, a Newton step can be applied.
Algorithm~\ref{alg:cdgeneric} shows a generic iCD algorithm using the ideas of this section.

Most models allow some further optimizations:
(i)~When the gradients of $\bphi$ or $\bpsi$ are sparse, some of the summands of eqs.~(\ref{eq:dlossp}, \ref{eq:dlosspp}) drop.
(ii)~The model parameters usually have some structure which can be used for traversing the model parameters more systematically.
We will show both of these steps in the next section for a variety of models.

\section{Applications}
\label{sec:applications}

In this section, we apply iCD to two classes of complex factorization models, namely feature-based factorization models and tensor factorization models.
We have chosen these two classes because they are very powerful and frequently used.
Moreover each of them has some interesting properties with respect to deriving iCD algorithms.
The provided algorithms can be directly applied to many common recommender system tasks.
This section also serves as a guide for deriving iCD algorithms in general.

\subsection{Matrix Factorization (MF)}

\begin{algorithm}[t]
  \caption{Implicit CD for MF}
  \label{alg:cdmf}
  \begin{algorithmic}[1]
    \Procedure{iCD-MF}{$S, C, I$}
      \State $W,H \leftarrow \mathcal{N}(0, \sigma)$
      \Repeat
        \For{$f^* \in \{1,\ldots,k\}$}
          \For{$f \in \{1,\ldots,k\}$} \label{line:cd_mf_context_begin}
            \State Compute $J_I(f^*, f)$
          \EndFor
          \For{$c^* \in C$}
            \State Compute $L'(w_{c^*,f^*}|S), L''(w_{c^*,f^*}|S)$
            \State Compute $R'(w_{c^*,f^*}), R''(w_{c^*,f^*})$
            \State $w_{c^*,f^*}\!\leftarrow\! w_{c^*,f^*}\!-\!\frac{L'(w_{c^*,f^*}|S) + \alpha R'(w_{c^*,f^*})}{L''(w_{c^*,f^*}|S) + \alpha R''(w_{c^*,f^*})}$
          \EndFor \label{line:cd_mf_context_end}
          \State Apply step \ref{line:cd_mf_context_begin} to \ref{line:cd_mf_context_end} to the items.
        \EndFor
      \Until{converged}
      \State \Return $W,H$
    \EndProcedure
  \end{algorithmic}
\end{algorithm}

\begin{figure}[t]
  \centering
  \includegraphics[width=0.23\textwidth]{./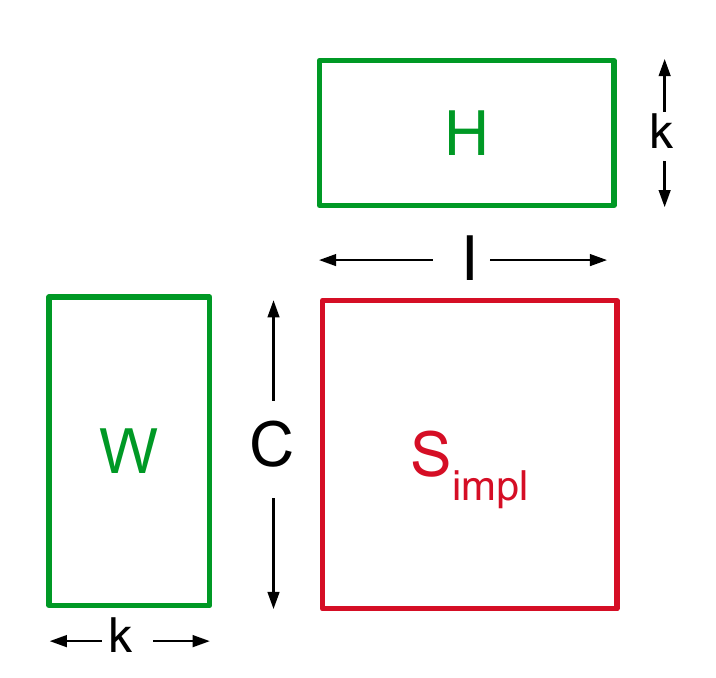}
  \caption{Matrix factorization:
   Each context $c$ is associated with an embedding $\bw_c$ and every item $i$ with an embedding $\bh_i$.
   The model parameters $W \in \mathbb{R}^{C \times k}, H \in \mathbb{R}^{I \times k}$ are learned to approximate the data $S_\text{impl}$ with the dot product of $\langle \bw_c, \bh_i\rangle$.
  \label{fig:mf}}
\end{figure}

We start by applying our framework to matrix factorization (see Figure~\ref{fig:mf}).
For MF, the scoring function is
\begin{align}
  \hat{y}(c,i) := \langle \textbf{w}_c, \textbf{h}_i \rangle = \sum_{f=1}^k w_{c,f}\, h_{i,f}
\end{align}
with model parameters $\Theta = \{W, H\}$ where $W \in \mathbb{R}^{C \times k}$ and $H \in \mathbb{R}^{I \times k}$.

A MF model is trivially \separable{k} with
\begin{align}
  \phi_f(c) = w_{c,f},\quad 
  \psi_f(i) = h_{i,f}.
\end{align}
Furthermore, the gradients are sparse
\begin{align}
  \frac{\partial \phi_f(c)}{\partial w_{c^*,f^*}} =
  \begin{cases}
    1, &\text{if } c = c^* \wedge f = f^* \\
    0, &\text{else}
  \end{cases}
\end{align}
and all second derivatives are 0.
Thus, the regularizer derivatives simplify to
\begin{align}
  R'(w_{c^*,f^*}) &= 2 \, \sum_{f=1}^k J_I(f,f^*)\, w_{c^*,f} \\
  R''(w_{c^*,f^*}) &= 2 \, J_I(f^*,f^*)
\end{align}
The derivation is symmetric for the item side.

As MF associates each model parameter with an embedding dimension $f$, we can traverse the parameters one dimension at a time.
A full step $\eta=1$ can be taken because MF is bilinear.
Algorithm \ref{alg:cdmf} shows the full procedure.

The computation of $J_I(f^*, \cdot)$ is trivially in $\O(|I|\,k)$.
Gradient computation of the implicit regularizer is $\O(k)$ per parameter and for the explicit part $\O(|S|)$ for all parameters.
Overall, the algorithm has a complexity of $\O((|I|+|C|)\,k^2 + |S|\,k)$ per iteration.

\subsection{Feature-Based Factorization Models}
\label{sec:feature_based_factorization_models}

One of the most powerful extension of MF is feature based modeling for the context and item.
Feature-based factorization models are strictly more powerful than MF and have shown large improvements in many applications (e.g. \cite{GantnerDrumondFreudenthalerEtAl2010,Rendle:tist2012}).
For instance, the cold-start problem is commonly solved by replacing or complementing user and item ids with user and item attributes \cite{GantnerDrumondFreudenthalerEtAl2010}.
Another example is context-aware recommendation, where the context is represented by several variables, e.g. location or time in addition to the user id.
Also sequential models can be represented by feature based modeling \cite{kanagal:vldb12}.

Learning general feature-based models on implicit feedback was restricted to BPR so far.
This is the first work that provides an implicit CD algorithm for this important model class.

\begin{figure}[t]
  \centering
  \includegraphics[width=0.27\textwidth]{./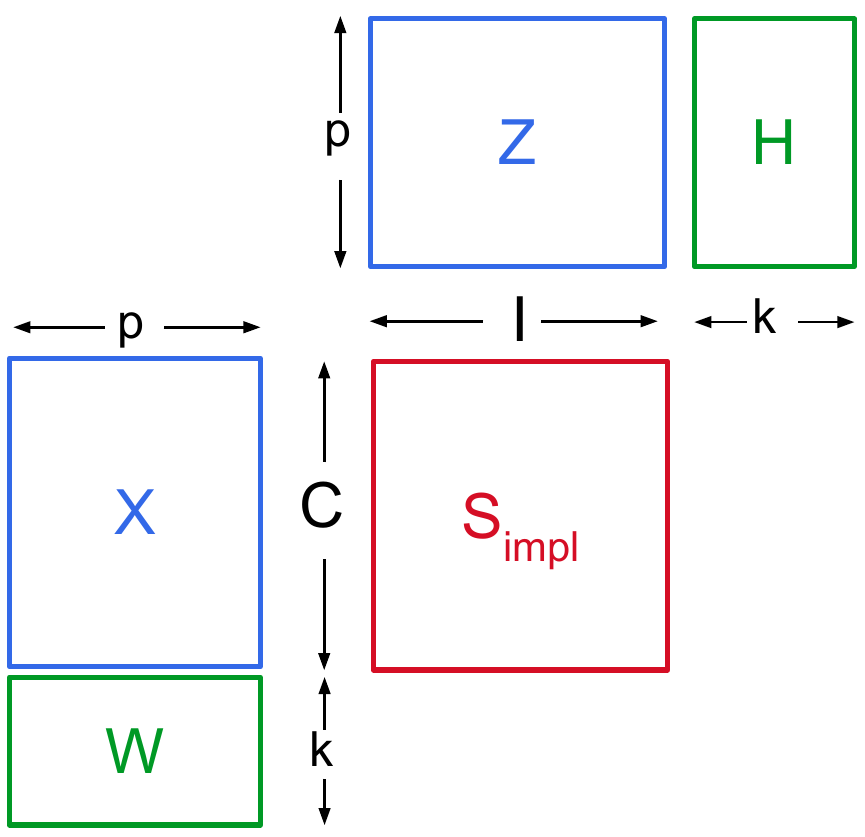}
  \caption{Matrix factorization with side information:
   In addition to the data $S_\text{impl}$, a feature vector $\bx_c \in \mathbb{R}^p$ is given for every context $c \in C$ and a feature vector $\bz_i \in \mathbb{R}^p$ for each item $i \in I$.
   Each of the context features $l \in \{1,\ldots,p\}$ is assigned a $k$-dimensional embedding vector $\bw_l \in \mathbb{R}^k$ and similarly $\bh_l \in \mathbb{R}^k$ for each item feature.
   The model parameters $W \in \mathbb{R}^{p \times k}, H \in \mathbb{R}^{p \times k}$ are learned to approximate the data $S_\text{impl}$ with a $X\,W\,(Z\,H)^t$.
  \label{fig:mf_side}}
\end{figure}

To formalize the problem, assume each $c \in C$ is represented by a feature vector $\bx_c \in \mathbb{R}^p$ and each $i \in I$ by a feature vector $\bz_i \in \mathbb{R}^p$.
See Figure~\ref{fig:mf_side} for an illustration. 

\subsubsection{MF with Side Information (MFSI)}

We start with a feature based extension of matrix factorization similar to \cite{GantnerDrumondFreudenthalerEtAl2010}:
\begin{align}
  \hat{y}(c,i) = \bx_c\,W\,(\bz_i\,H)^t = \sum_{f=1}^k \left(\sum_{l=1}^p x_{c,l} \, w_{l, f}\right) \left(\sum_{l=1}^p z_{i,l} \, h_{l, f}\right)
\end{align}
with $\Theta = \{W,H\}$.
MFSI is \separable{k} using
\begin{align}
  \phi_f(c) = \sum_{l=1}^p x_{c,l} \, w_{l, f},\quad \psi_f(i) = \sum_{l=1}^p z_{i,l} \, h_{l, f}
\end{align}
and the gradients are sparse
\begin{align}
  \frac{\partial \phi_f(c)}{\partial w_{l^*,f^*}} =
  \begin{cases}
    x_{c,l^*}, &\text{if } f = f^* \\
    0, &\text{else}
  \end{cases}
\end{align}
Due to sparse gradients of $\bphi$ and $\bpsi$, the first and second regularizer derivatives simplify to:
\begin{align}
  R'(w_{l^*,f^*}) &= 2 \, \sum_{f=1}^k J_I(f,f^*)\, \sum_{c \in C} x_{c,l^*}\, \phi_f(c) \\
  R''(w_{l^*,f^*}) &= 2 \, J_I(f^*,f^*) \sum_{c \in C} x_{c,l^*}^2
\end{align}
Note that the sums over the context variable depend only on context where $x_{c,l^*} \ne 0$, so with a sparse iterator, the computation is $\O(k\,N_Z(X))$ for optimizing all of the context variables in a given embedding layer $f^*$.

This computation assumes that $\Phi$ and $\Psi$ are given.
Obviously, while optimizing $W$, $\Psi$ does not change and while optimizing $H$, $\Phi$ does not change.
However, while optimizing $W$, $\Phi$ changes but can be kept in sync with changes in $W$ by updating:
\begin{align}
  \phi_{f^*}(c) \leftarrow \phi_{f^*}(c) + x_{c,l^*} (w^{\text{new}}_{l^*,f^*} - w^{\text{old}}_{l^*,f^*})
\end{align}  
The item side can be derived analogously.
The total runtime of Algorithm~\ref{alg:cdsi} for one epoch over all variables is $\O(k^2\,(N_Z(X) + N_Z(Z)))$ for the implicit regularizer.

\begin{algorithm}[t]
  \caption{Implicit CD for MF with Side Information}
  \label{alg:cdsi}
  \begin{algorithmic}[1]
    \Procedure{iCD-MFSide}{$S, C, I$}
      \State $W,H \leftarrow \mathcal{N}(0, \sigma)$
      \Repeat
        \State Compute $\Phi$ and $\Psi$
        \For{$f^* \in \{1,\ldots,k\}$} 
          \For{$f \in \{1,\ldots,k\}$} \label{line:cd_mfsi_context_begin} 
            \State Compute $J_I(f^*, f)$
          \EndFor
          \For{$l^* \in \{1,\ldots,p\}$}
            \State Compute $L'(w_{l^*,f^*}|S), L''(w_{l^*,f^*}|S)$ 
            \State Compute $R'(w_{l^*,f^*}), R''(w_{l^*,f^*})$ 
            \State $w_{l^*,f^*}\!\leftarrow\!w_{l^*,f^*}\!-\!\frac{L'(w_{l^*,f^*}|S) + \alpha R'(w_{l^*,f^*})}{L''(w_{l^*,f^*}|S) + \alpha R''(w_{l^*,f^*})}$
            \State Update $\Phi$
          \EndFor \label{line:cd_mfsi_context_end}
          \State Apply step \ref{line:cd_mfsi_context_begin} to \ref{line:cd_mfsi_context_end} to the items.
        \EndFor
      \Until{converged}
    \EndProcedure
  \end{algorithmic}
\end{algorithm}

\subsubsection{Factorization Machines}

The Factorization Machine (FM) model \cite{Rendle:tist2012} is a more complex factorized model that includes biases and interactions between all variables.
In general, the FM for a feature vector $\bx \in \mathbb{R}^p$ is defined as
\begin{align}
  \hat{y}(\bx) = b + \sum_{l=1}^p x_l\,\tilde{w}_l + \sum_{l=1}\sum_{l'>l} \langle \bw_{l}, \bw_{l'} \rangle x_l x_l'
\end{align}
where $b$ is a global bias parameter, $\tilde{\bw}$ are feature biases and $W$ are the embeddings.
In our case, for a context-item pair $(c,i)$, we set the input feature vector $\bx$ as the concatenation of the context and item feature vectors: $\bx := (\bx_c, \bz_i)$.

The FM model is \separable{(k+2)}, with
\begin{align}
  \phi_f(c) &= \sum_{l=1}^p x_{c,l} \, w_{l, f},\quad  
  \psi_f(i) = \sum_{l=1}^p z_{i,l} \, h_{l, f},\\
  \phi_{k+1}(c) &= b + \sum_{l=1}^p x_{c,l} \, \tilde{w}_{l} + \sum_{l=1}\sum_{l'>l} \langle \bw_{l}, \bw_{l'} \rangle x_{c,l} x_{c,l'},\\
  \psi_{k+1}(i) &= 1,\\
  \phi_{k+2}(c) &= 1,\\
  \psi_{k+2}(c) &= \sum_{l=1}^p z_{i,l} \, \tilde{h}_{l} + \sum_{l=1}\sum_{l'>l} \langle \bh_{l}, \bh_{l'} \rangle z_{i,l} z_{i,l'} .
\end{align}
where for the context, $\bphi$ is parameterized by $\tilde{\bw} \in \mathbb{R}^p$ for the linear part and $W \in \mathbb{R}^{p \times k}$ for the factors.
And analogously for items, $\bpsi$ is parameterized by $\tilde{\bh} \in \mathbb{R}^p$ for the linear part and $H \in \mathbb{R}^{p \times k}$ for the factors.

The gradients are sparse:
\begin{align}
  \frac{\partial \phi_{f}(c)}{\partial \tilde{w}_{l^*}} &= \begin{cases}
    x_{c,l^*}, &\text{if } f = k+1 \\
    0,&\text{else}
  \end{cases}\\
  \frac{\partial \phi_f(c)}{\partial w_{l^*,f^*}} &= \begin{cases}
    x_{c,l^*}, &\text{if } f = f^* \\
    x_{c,l^*} (\phi_{f^*}(c) - x_{c,l^*} w_{l^*,f^*}), &\text{if } f = k+1\\
    0, &\text{else}
  \end{cases}
\end{align}
Similar to MFSI, due to the sparsity in gradients, one of the nested loops drops for the first regularizer derivative $R'$ and both nested sums drop for the second regularizer derivative $R''$.
Consequently, the flow and runtime analysis for FM is the same as for MFSI.

\subsection{Tensor Factorization}

\begin{figure}[t]
  \centering
  \includegraphics[width=0.4\textwidth]{./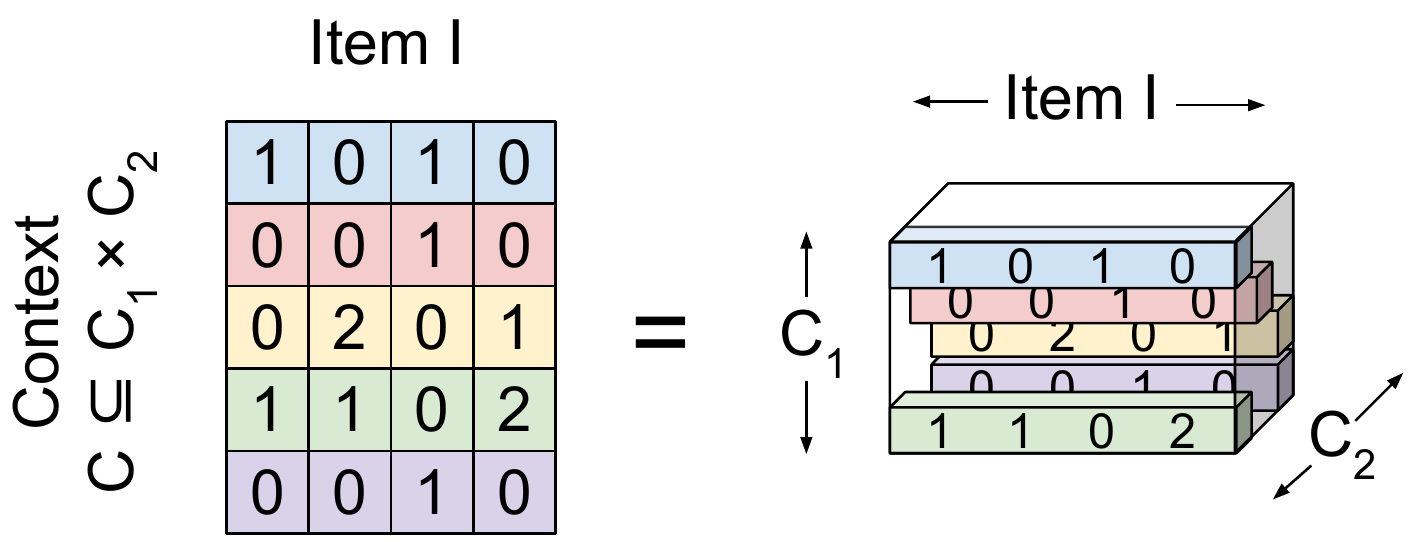}
  \caption{Example for tensor factorization data: two categorical variables on the context side, $C \subseteq C_1 \times C_2$. This data can be interpreted as a 3-mode tensor with missing values.
  \label{fig:tf_data}}
\end{figure}

Tensor factorization generalizes matrix factorization and deals with problems that involve more than two categorical variables.
For instance, in personalized recommendation of tags for bookmarks~\cite{Symeonidis:tkde10}, the \emph{context} consists of two variables, the user $C_1$ and the bookmark $C_2$, and the item $I$ corresponds to the tag.
For personalized web search~\cite{Sun:www2005}, the context consists of the user $C_1$ and the query $C_2$ and the item $I$ to the web page.
The data can be seen as a three mode tensor over $C_1$, $C_2$ and $I$.
Figure~\ref{fig:tf_data} shows an example of how observations over context $C \subseteq C_1 \times C_2$ and items $I$ translate to a tensor.
A tensor factorization model tries to approximate the tensor with a low rank decomposition (see Figure~\ref{fig:tf}).
Although tensor factorization models are multilinear, we show that they fit well into our framework.

Additionally, we want to highlight, that existing tensor factorization learning algorithms~\cite{Sun:www2005,Symeonidis:tkde10,PilaszyZibriczkyTikk2010} require that the tensor data is dense, i.e., the empty parts in the tensor in Figure~\ref{fig:tf_data} are filled with zeros.
This would imply that context combinations that never have been observed, are used for training as well, i.e., $C=C_1 \times C_2$.
In some applications, this might not make sense, for instance if $C_1$ encodes a device type and $C_2$ encodes an operating system version.
Our iCD framework works for both sparse and dense context.
We will point out the differences when necessary.

\begin{figure}[t]
  \centering
  \includegraphics[width=0.27\textwidth]{./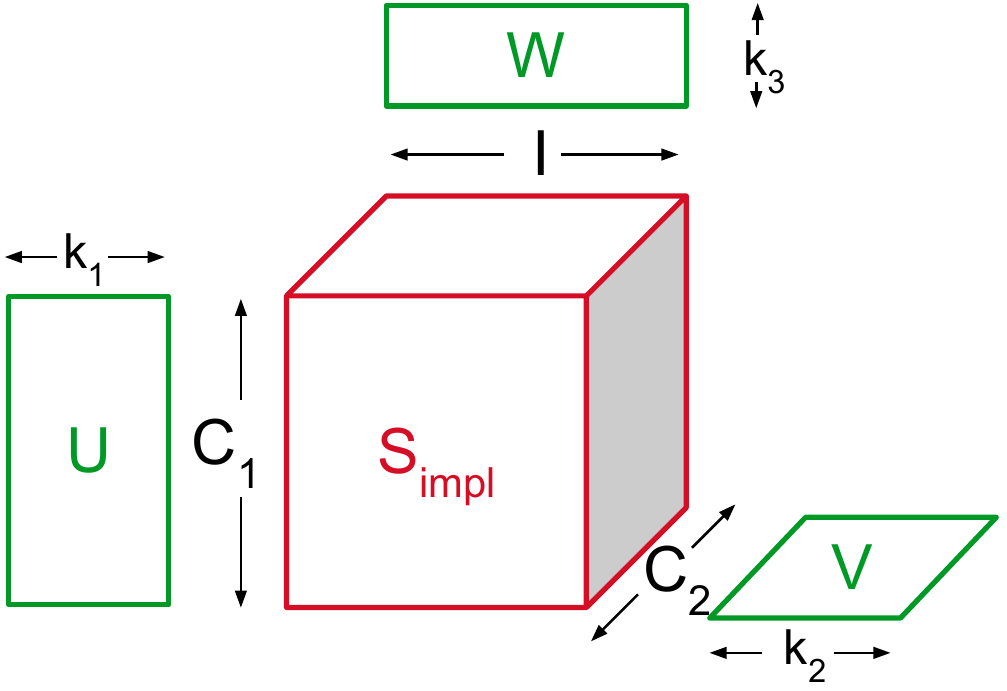}
  \caption{A tensor factorization model, decomposes a given tensor into one matrix per mode, here $U \in \mathbb{R}^{C_1 \times k_1}$ for $C_1$, $V \in \mathbb{R}^{C_2 \times k_2}$ for $C_2$ and $W \in \mathbb{R}^{C_3 \times k_3}$ for $I$.
  \label{fig:tf}}
\end{figure}

\subsubsection{Parallel Factor Analysis (PARAFAC)}

We first discuss the Parallel Factor Analysis (PARAFAC)~\cite{harshman:1970} model which is a 3-mode extension of matrix factorization.
\begin{align}
  \hat{y}(c_1, c_2, i) := \sum_{f=1}^k u_{c_1,f}\,v_{c_2,f}\, w_{i,f}
\end{align}
with $\Theta = \{U, V, W\}$ where $U \in \mathbb{R}^{C_1 \times k}$, $V \in \mathbb{R}^{C_2 \times k}$ and $W \in \mathbb{R}^{I \times k}$.
PARAFAC is \separable{k} with
\begin{align}
  \phi_f(c_1, c_2) = u_{c_1,f}\,v_{c_2, f},\quad 
  \psi_f(i) = w_{i,f}
\end{align}
Again, gradients are sparse:
\begin{align}
  \frac{\partial \phi_f(c_1, c_2)}{\partial u_{c_1^*,f^*}} =
  \begin{cases}
    v_{c_2,f}, &\text{if } c_1 = c_1^* \wedge f = f^* \\
    0, &\text{else}
  \end{cases}
\end{align}
and the loss derivatives simplify to
\begin{align}
  R'(u_{c_1^*,f^*}) &= 2 \, \sum_{f=1}^k J_I(f,f^*)\,u_{c^*,f} \sum_{c_2: (c^*_1, c_2) \in C} v_{c_2,f}\, v_{c_2,f^*} \label{eq:parafaclp}\\
  R''(u_{c_1^*,f^*}) &= 2 \, J_I(f^*,f^*)\sum_{c_2: (c^*_1, c_2) \in C} v_{c_2,f^*}\, v_{c_2,f^*} \label{eq:parafaclpp}
\end{align}
The item side is equivalent to matrix factorization.

If the context is dense and includes all possible combinations of context variables, i.e., if $C = C_1 \times C_2$, then the computation of $J_C(f,f')$, can be decomposed to:
\begin{align}
  J_C(f,f') = \underbrace{\sum_{c_1 \in C} u_{c_1,f}\,u_{c_1,f'}}_{=:J_{C_1}(f,f')} \underbrace{\sum_{c_2 \in C} v_{c_2,f}\,v_{c_2,f'}}_{=:J_{C_2}(f,f')}
\end{align}
This means, the computation is in $\O(|C_1|+|C_2|)$ instead of $\O(|C_1|\,|C_2|)$.
On the other hand if $C$ is sparse and contains only the subset of the observed context combinations, i.e., $C \subset C_1 \times C_2$, then there is no need for decomposing this sum.
The same applies to the loss derivatives of eqs. (\ref{eq:parafaclp},\ref{eq:parafaclpp}):
Again, if all possible context is modeled, then $\{c_2: (c^*_1, c_2) \in C\} = C_2$ and thus $J_{C_2}(f,f')$ can replace the sum over $C_2$.

The overall runtime for PARAFAC's implicit regularizer is $\O((|C| + |I|)\,k^2)$ for sparse context and $\O((|C_1| + |C_2| + |I|)\,k^2)$ for dense context.
The traversal over model parameters can be arranged as in the MF algorithm.

\subsubsection{Tucker Decomposition}

Tucker Decomposition (TD)~\cite{tucker:1966} is a generalization of PARAFAC which computes all interactions between the factor matrices.
The strength of each interaction is given by a core tensor $B$.
For our running example with two context variables $c_1,c_2$ and one item variable $i$, TD is defined as
\begin{align}
  \hat{y}(c_1, c_2, i) = \sum_{f_1=1}^{k_1} \sum_{f_2=1}^{k_2} \sum_{f_3=1}^{k_3} b_{f_1,f_2,f_3} u_{c_1,f_1}\,v_{c_2,f_2}\,w_{i,f_3}
\end{align}
with $\Theta = \{B, U, V, W\}$ where $B \in \mathbb{R}^{k_1 \times k_2 \times k_3}$ is the core tensor and $U \in \mathbb{R}^{|C_1| \times k_1}$, $V \in \mathbb{R}^{|C_2| \times k_2}$ and $W \in \mathbb{R}^{|I| \times k_3}$.
TD is much more computationally expensive than PARAFAC, requiring $\O(k_1\,k_2\,k_3)$ operations just for evaluating the model on one data point.

Even though Tucker decomposition contains nested sums, it is \separable{k_3} with
\begin{align*}
  \phi_f(c_1, c_2) = \sum_{f_1=1}^{k_1} \sum_{f_2=1}^{k_2} b_{f_1,f_2,f} u_{c_1,f_1}\,v_{c_2, f_2},\quad 
  \psi_f(i) = w_{i,f}
\end{align*}
The derivatives of these functions are:
\begin{align}
  \frac{\partial \phi_f(c_1,c_2)}{\partial u_{c_1,f_1^*}} &= \begin{cases}
    \sum_{f_2=1}^{k_2} b_{f_1^*,f_2,f}\,v_{c_2,f_2}, &\text{if } c_1 = c_1^*\\
    0, &\text{else}
  \end{cases}\\
  \frac{\partial \phi_f(c_1,c_2)}{\partial v_{c_2,f_2^*}} &= \begin{cases}
    \sum_{f_1=1}^{k_1} b_{f_1,f_2^*,f}\,u_{c_1,f_1}, &\text{if } c_2 = c_2^*\\
    0, &\text{else}
  \end{cases}\\
  \frac{\partial \phi_f(c_1,c_2)}{\partial b_{f_1^*,f_2^*,f_3^*}} &= \begin{cases}
    u_{c_1,f^*_1}\,v_{c_2, f^*_2}, &\text{if } f = f_3^* \\
    0, &\text{else}
  \end{cases}\\
  \frac{\partial \psi_f(i)}{\partial w_{i^*, f_3^*}} &=\begin{cases}
    1, &\text{if } f = f_3^* \wedge i = i^*\\
    0, &\text{else}
  \end{cases}
\end{align}
Unlike all the other models we have presented so far, the gradients for $\bphi$ are non-zero for any factor index $f \in \{1,\ldots, k\}$.
Consequently, the nested loops over factors of the loss gradient (eq.~\ref{eq:dlossp}) cannot be improved further.
However, for $\bpsi$, which is sparse, the same optimization as in the other models can be applied.

Like for PARAFAC, if $C$ is dense, i.e., $C=C_1 \times C_2$, we can precompute intermediate matrices for $C_1$ and $C_2$ and the computation of $J_C(f,f')$ simplifies to
\begin{align*}
  \sum_{f_1=1}^{k_1} \sum_{f'_1=1}^{k_1} \sum_{f_2=1}^{k_2} \sum_{f'_2=1}^{k_2} b_{f_1,f_2,f} \,b_{f'_1,f'_2,f'} J_{C_1}(f_1,f'_1) J_{C_2}(f_2,f'_2)
\end{align*}
If $C$ is sparse, there is no need for this optimization and we can use a straightforward computation of $J_C$.
The overall runtime complexities are $\O(k_1^2\,k_2^2\,k_3^2\,(|C_1|+|C_2|+|I|))$ for dense context and $\O(k_1^2\,k_2^2\,k_3^2\,(|C|+|I|))$ for sparse context.

\section{Experiments}
\label{sec:experiments}

\begin{figure*}
\centering
\begin{subfigure}[b]{0.49\textwidth}
\includegraphics[width=0.49\textwidth]{./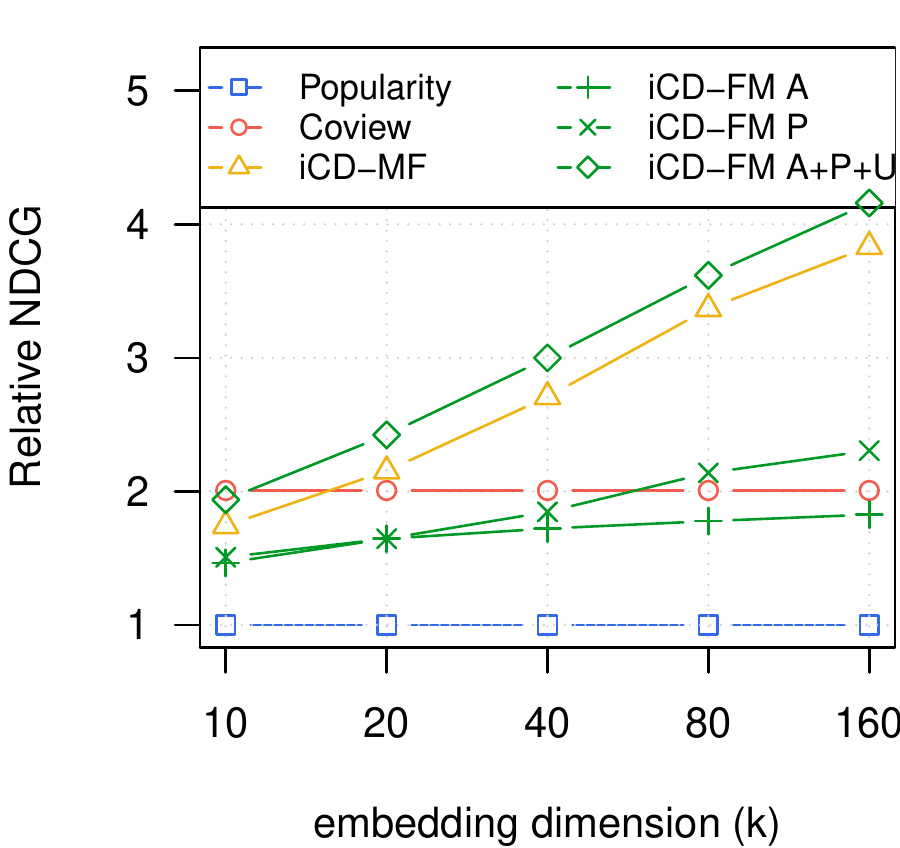}
\includegraphics[width=0.49\textwidth]{./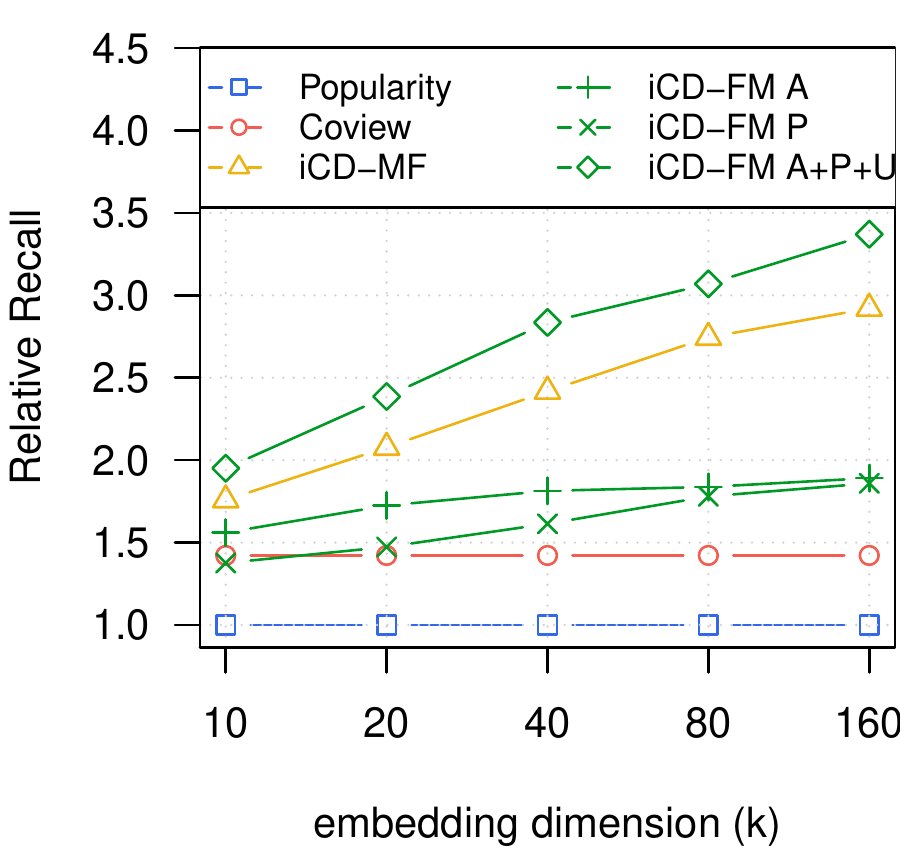}
\caption{Offline Recommendation\label{fig:offline_recommendation}}
\end{subfigure}
\begin{subfigure}[b]{0.49\textwidth}
\includegraphics[width=0.49\textwidth]{./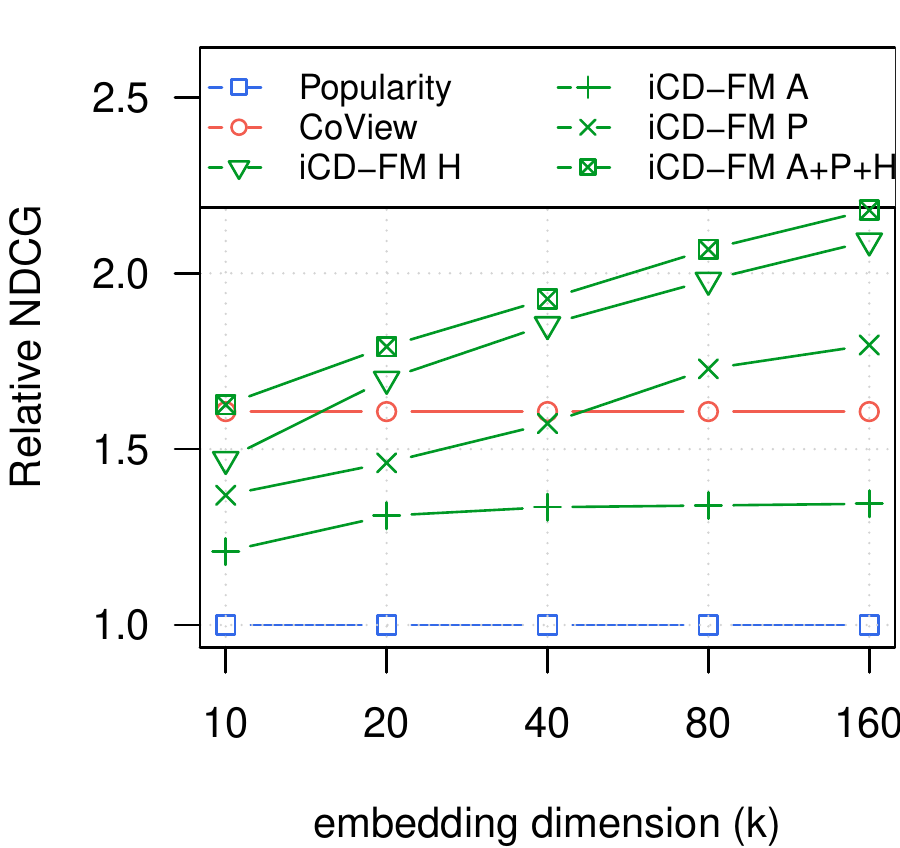}
\includegraphics[width=0.49\textwidth]{./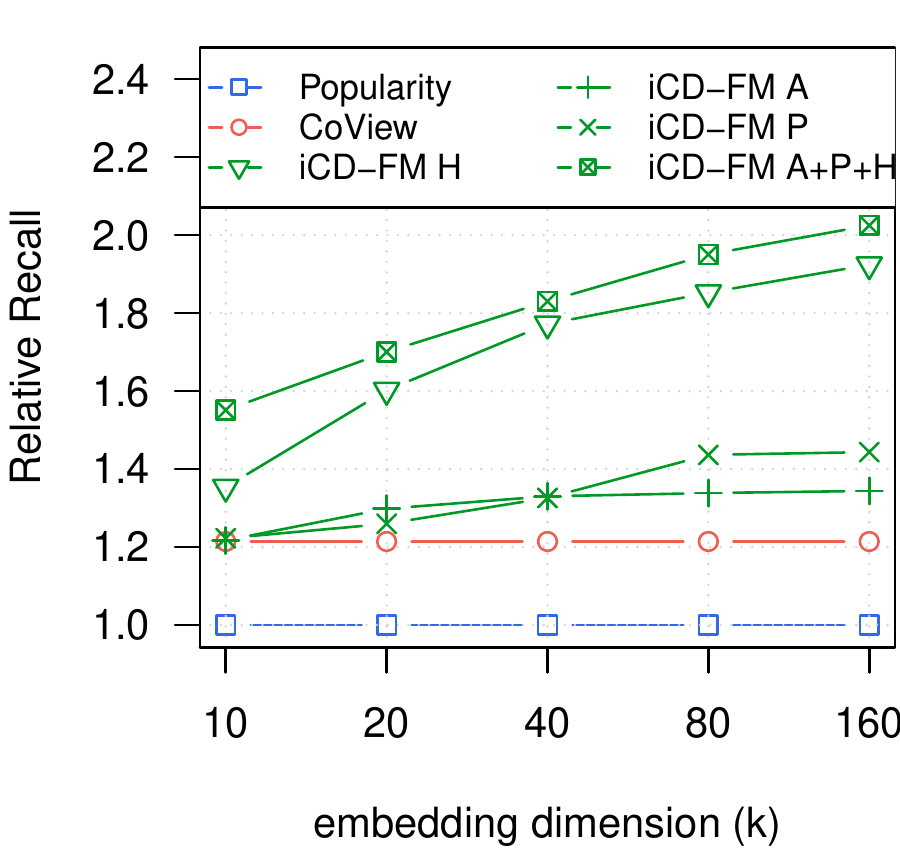}
\caption{Instant Recommendation\label{fig:online_recommendation}}
\end{subfigure}
  \caption{Different variants of context features are used in the {\tt iCD-FM} models: \emph{A} = gender, age, country and device, \emph{P} = the previously watched video, \emph{H} = all videos watched so far, \emph{U} = user id.\label{fig:quality}}
\end{figure*}

The main objective of the experiments is to illustrate the generality of the iCD
framework. We show how iCD can be applied to a variety of recommender problems
that cannot be solved with MF alone.  For MF models, efficient coordinate
descent algorithms (CD) have been previously proposed~\cite{Hu:icdm2008} and its
performance compared against gradient descent algorithms such as
BPR~\cite{rendle:uai09}. Both approaches are considered state-of-the-art and
while CD outperforms BPR on certain
datasets~\cite{NingKarypis2011,Zhao14,sedhain2016effectiveness,volkovs2015effective,zhao2015improving},
BPR has been shown to work better on
others~\cite{HeM16,ShiKaratzoglouBaltrunasEtAl2012,ShiKaratzoglouBaltrunasEtAl2012a,NingKarypis2011}.
The purpose of our experiments is not to compare BPR and
CD on yet another dataset, but rather to demonstrate the versatility of the
iCD framework and illustrate how it can serve as a building block for future
research on complex recommender models. As with MF, it is likely that both iCD
and BPR will show strengths in different applications.

\subsection{Experimental Setup}
We evaluate on a dataset of $200,000$ users interacting with YouTube.
Our subset contains $|I| = 68,000$ videos.
The dataset also contains side information about age, country, gender and device info.
We apply iCD to three popular recommendation problems -- Cold-Start, Offline
Recommendation, and Instant Recommendation (see Section~\ref{sec:results}).
We compare the following algorithms:
\begin{itemize}
  \item {\tt Popularity}: a static recommender that returns the most popular videos.
  \item {\tt Coview}: returns based on the previously watched video, the most commonly chosen next video.
  \item {\tt iCD-MF}: user-item matrix factorization using iCD for optimization, similar to~\cite{Hu:icdm2008}.
  \item {\tt iCD-FM}: a factorization machine with varying features for the context
  (Section~\ref{sec:feature_based_factorization_models}). We report results for different feature choices.
\end{itemize}
We measure the recall and NDCG for the top 100 returned videos.
Note that we report relative improvements over the {\tt Popularity} recommender.
All hyperparameters are tuned on a separate tuning holdout set.

\subsection{Results}
\label{sec:results}

\subsubsection{Cold-Start Recommendation}

In the Cold-Start recommendation~\cite{GantnerDrumondFreudenthalerEtAl2010}
scenario, we assume that a user interacts with the recommender system for the first time.
To simulate this scenario, we select a random subset of users and hold out all their events for evaluation purposes; we train on the remaining users.

The common approach for dealing with cold-start is to represent a user by side information~\cite{GantnerDrumondFreudenthalerEtAl2010}.
Here, we use the feature-based FM model ({\tt iCD-FM}) with the user's age,
gender, country and device info as context features.
Figure~\ref{fig:cold_start} shows that attribute-aware FM achieves a 2x improvement over the baselines.
As expected, neither {\tt MF} nor {\tt Coview} can do any better than most-popular recommendation.

\subsubsection{Offline Recommendation}

In the {\em Offline Recommendation} scenario, we hold out the last feedback of
each user and use all the previous feedback for training. This is the most
commonly used protocol to evaluate the performance of a recommender algorithm.
We experiment with multiple FM models: (1) {\tt iCD-FM A:} an FM with user attributes,
(2) {\tt iCD-FM P:} a sequential FM that only uses the previously watched video
(similar to FPMC~\cite{rendle:www10} or Coview) and (3) {\tt iCD-FM A+P+U:} an FM
that uses all signals: attributes, previously watched video and user id
(similar to FPMC~\cite{rendle:www10} with user attributes).
As shown in Figure~\ref{fig:offline_recommendation}, the complex FM model with all
features achieves the best quality, illustrating the flexibility of feature
engineering with iCD.

\subsubsection{Instant Recommendation}

In large-scale industrial applications, online training is often not feasible due to complex serving stacks.
Commonly, models are periodically trained offline (e.g., every day or week) and applied on a stream of user interactions.
When the model is queried to generate recommendations for a user, all
feedback until the current time is taken into account for prediction.
We simulate this setting by choosing a global cutoff time where all the events
before the cutoff are used for training and all the remaining ones for
evaluation.

In such settings, models relying on user ids, such as MF, cannot capture recent
feedback. Instead, describing a user by the sequence of previously watched
videos allows for instant personalization.
Such a model can be configured using a feature-based FM
model (Section~\ref{sec:feature_based_factorization_models}) and
we experiment with four configurations (1) {\tt iCD-FM A:} FM
using user attributes, (2) {\tt iCD-FM P:} a sequential FM based on the
previously watched video, (3) {\tt iCD-FM H:} a FM based on all previously
watched videos, (4) {\tt iCD-FM A+P+H:} an FM combining all signals. As
expected, the complex FM model with all features achieves the best quality.
Again, we would like to note the generality of the iCD framework, which enables
flexible feature engineering.

\subsection{Computational Costs}

As stated in Section~\ref{sec:implicit}, any conventional CD solver, e.g.~\cite{Rendle:tist2012}, could solve the implicit feedback problem. 
Now, we substantiate that this is infeasible because of the large number of implicit examples.
Figure~\ref{fig:comp_costs} compares the computational cost for learning an FM with a conventional CD to the costs of iCD on our dataset with 70k items.
We use three different context features from Figure~\ref{fig:quality}.
The plot shows relative costs to {\tt iCD-FM P}.
For all three context choices, conventional CD shows four orders of
magnitude higher compuational costs than iCD.
The empirical measured runtime for iCD was in the order of minutes; consequently, CD's four order of magnitude increase in runtime translates to weeks of training for each iteration.
Clearly, using a conventional CD solver to optimize the implicit loss directly is infeasible.

\begin{figure}[t]
\centering
\includegraphics[width=0.37\textwidth]{./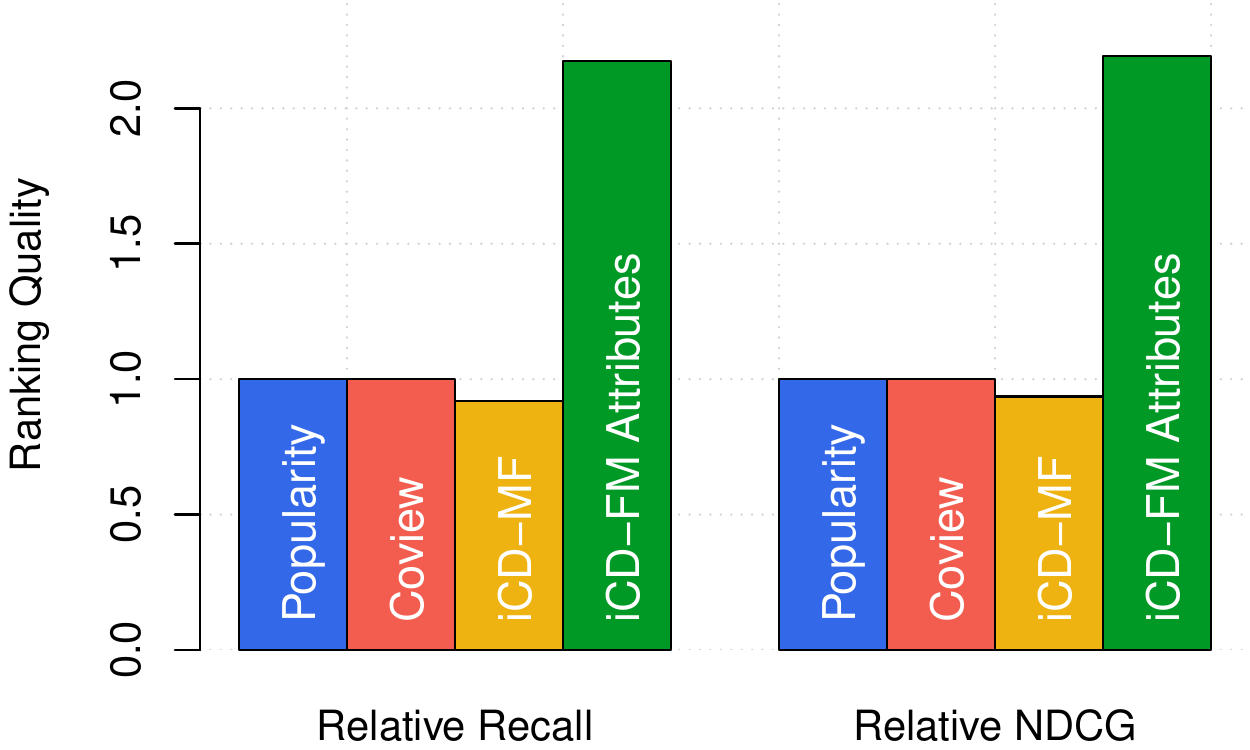}
  \caption{Cold Start Recommendation\label{fig:cold_start}}
\end{figure}

\begin{figure}[t]
\centering
\includegraphics[width=0.37\textwidth]{./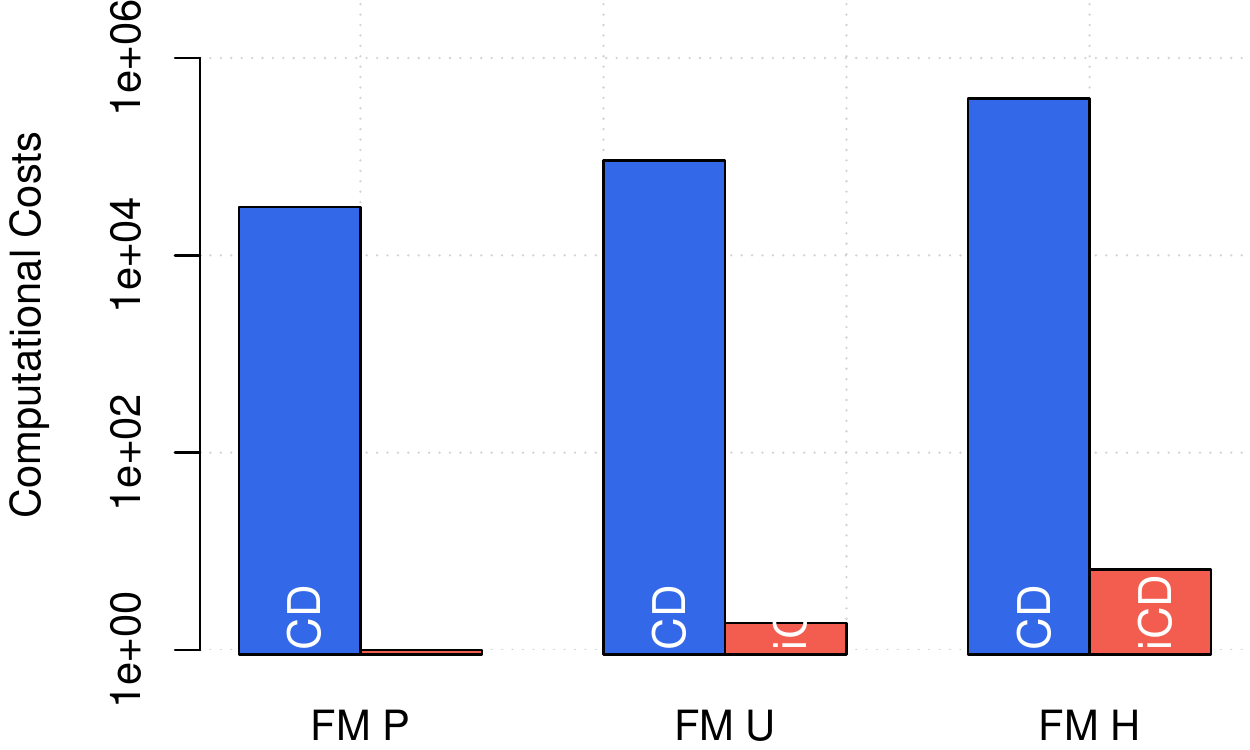}
  \caption{Training costs in log-scale of conventional CD (left, blue) versus iCD (right, red) on our implicit dataset.\label{fig:comp_costs}}
\end{figure}

\section{Conclusion}

In this work, we have presented a general, efficient framework for learning recommender system models from implicit feedback.
First, we have shown that learning from implicit feedback can be reformulated as optimizing a cheap explicit loss and an expensive implicit regularizer.
Then we have introduced the concept of \separable{k} models.
We have shown that the implicit regularizer of any \separable{k} model can be computed efficiently without iterating over all context-item pairs.
Finally, we have shown that many popular recommender models are \separable{k}, including matrix factorization, factorization machines and tensor factorization.
Moreover, we have provided efficient learning algorithms for these models based on our framework.
Our framework is not limited to the models discussed in the paper but designed to serve as a general blueprint for deriving learning algorithms for recommender systems.

\balance

\bibliographystyle{abbrv}
\bibliography{paper}
\end{document}